\documentclass[pra,twocolumn]{revtex4-2}
\usepackage{amsmath}
\usepackage{amssymb}
\usepackage{graphicx}
\usepackage{braket}
\usepackage{stmaryrd}
\usepackage{hyperref}
\usepackage{color}
\usepackage{dsfont}
\usepackage{bm}
\usepackage{booktabs}
\usepackage{amsthm}

\hypersetup{
    colorlinks,
    citecolor=blue,
    linkcolor=blue,
    urlcolor=blue
}

\newtheorem{lemma}{Lemma}

\begin{document}

\title{Structural $f$-divergence: Tight Universal Bounds for Cost Function Moments and Gradients in Parameterized Quantum Circuits}

\author{Tomohiro Nishiyama}
\email{htam0ybboh@gmail.com}
\affiliation{Independent Researcher, Tokyo 206-0003, Japan}

\author{Yoshihiko Hasegawa}
\email{hasegawa@biom.t.u-tokyo.ac.jp}
\affiliation{Department of Electrical Engineering and Information Systems, Graduate School of Engineering, The University of Tokyo,
Tokyo 113-8656, Japan}
\date{\today}

\begin{abstract}

The barren plateau phenomenon, in which cost-function gradients of 
variational quantum algorithms vanish exponentially, remains a central obstacle 
for near-term quantum computing. Existing analyses typically depend on 
t-design or Haar-random assumptions and bound quantities at the level of 
unitary distributions, offering limited insight for designing probability 
measures on the parameter space of parameterized quantum circuits. 
In this paper, we introduce the structural $f$-divergence, a symmetric $f$-divergence-based measure between probability distributions on the parameter space. 
We establish analytically  trade-off inequalities that bound the discrepancies in the 
expected gradient magnitude and in the cost-function moments between a
distribution on PQC
and a reference distribution; equality is attained by a 
minimal one-qubit, one-layer ansatz. As applications, we derive necessary 
conditions on probability measures for avoiding BPs and cost concentration, 
and sufficient conditions that suppress noise-induced deviations.

\end{abstract}

\maketitle
\section{Introduction\label{sec:Introduction}}

In recent years, variational quantum algorithms (VQAs) have emerged as leading candidates for demonstrating the utility of quantum computing on noisy intermediate-scale quantum (NISQ) devices. However, the optimization process of VQAs faces a significant challenge known as the barren plateau (BP) phenomenon, where the gradients of the cost function with respect to the circuit parameters vanish exponentially. In Ref.~\cite{mcclean2018barren}, the origins of BPs have been extensively analyzed from various perspectives. More recently, algebraic approaches utilizing Lie algebras have enabled a more rigorous discussion of the relationship between circuit symmetry and gradient vanishing~\cite{ragone2024lie, diaz2023showcasing}.
To mitigate the BP problem, numerous strategies have been proposed, including the adoption of local cost functions~\cite{cerezo2021cost}, sophisticated initialization schemes~\cite{grant2019initialization}, and layer-by-layer optimization of circuit parameters ~\cite{skolik2021layerwise}. A common thread underlying these strategies is the management of the quantum circuit's expressibility. Reference~\cite{sim2019expressibility} proposed a quantitative measure of expressibility based on the Kullback-Leibler divergence, using the Haar random distribution as an ideal reference. Furthermore, Ref.~\cite{holmes2022connecting} introduced a metric based on the diamond norm, elucidating a fundamental trade-off: circuits with higher expressibility (those closer to Haar randomness) are more susceptible to the risk of encountering BPs. 
Furthermore, the BP phenomenon is closely linked to the cost concentration around a specific mean. It has been shown that these two effects are equivalent under certain conditions, such as when the circuit depth is sufficiently large~\cite{arrasmith2022equivalence}. However, most existing studies rely on analyses that assume a high degree of randomness, such as t-designs or Haar distributions. In practical applications, circuits often operate in transient regimes that do not reach a t-design or involve specific probability distributions. Despite this, there is still a lack of comprehensive and unified research on how the discrepancy from a reference distribution quantitatively impacts the behavior of gradients and higher-order moments in such realistic scenarios. Moreover, conventional evaluations primarily focus on bounds derived from the distribution of unitary operators. Consequently, they fail to provide direct design principles for the probability measures over the parameter space in actual circuit architecture. Furthermore, existing inequalities regarding the trade-off between expressibility and BPs remain insufficient, as they do not offer tight evaluations that specify the conditions under which the equality holds.

In this paper, we introduce the \textit{structural $f$-divergence} [cf. Eq.~\eqref{eq:def_ex_div}], a measure defined between probability measures on the parameter space or, more essentially, their induced measures on the unitary group manifold.
The structural $f$-divergence is formulated based on the symmetric $f$-divergence, encompassing a canonical class of metrics including the total variation distance, Jensen-Shannon divergence, and squared Hellinger distance. Our framework requires only three fundamental conditions on the generator function: twice-differentiability, strict convexity, and the normalization $f(1)=0$.
A key contribution of this work is the derivation of fundamental trade-off relations between this structural $f$-divergence and the statistical characteristics of quantum circuits. Specifically, we establish rigorous bounds on the discrepancies in the expected magnitude of gradients [cf. Eq.~\eqref{eq:main_result1}] and the moments of the cost function [cf. Eq.~\eqref{eq:main_result2}] when comparing a distribution on PQCs
against a reference measure. Notably, these bounds are analytically tight; we demonstrate that the equalities are exactly attained by a minimal configuration: a one-qubit, one-layer ansatz under two-element probability measures. 
Establishing bounds on the parameter space offers a significant advantage: it enables a direct evaluation of how specific sampling strategies, such as initialization schemes or parameter-wise noise, influence the resulting cost function landscape.
These inequalities provide the sharpest possible characterization of how a circuit's statistical profile constrains its trainability. 
As practical applications of our results, we first derive necessary conditions on probability measures to avoid barren plateaus and cost concentration. Furthermore, we establish sufficient conditions for suppressing deviations in the expectation of the absolute value of gradients and the moments of the cost function when noise-induced perturbations shift the probability measure away from the ideal case.
Our results reveal a universal connection between the information-geometric structure of the underlying distributions and the observable landscape behavior, independent of the specific functional form of the objective.

\begin{table*}
    \centering
    \begin{tabular}{ccc}
        \toprule
        & Bound & Equality conditions  \\
        \midrule
        Magnitude of gradient expectation value  & $\displaystyle  \frac{\left|\mathbb{E}_{P_{\Theta}}[\:|\partial_j \braket{O}|\:]-\mathbb{E}_{Q_{\Theta}}[\:|\partial_j \braket{O}|\:]\right|}{\|O\|_\infty} \le 2\|H_j\|_RD_{f}^{\mathrm{str}}(P_{\Theta},Q_{\Theta})$    & Eqs.~\eqref{eq:equality_cond_circuit} and~\eqref{eq:equality_cond_grad}\\
        Moment of cost function ($k$: even) & $\displaystyle \frac{\left|\mathbb{E}_{P_{\mathcal{Z}}}[\braket{O}^k]-\mathbb{E}_{Q_{\mathcal{Z}}}[\braket{O}^k]\right|}{\|O\|_\infty^k }\le  D_{f}^{\mathrm{str}}(P_{\mathcal{Z}},Q_{\mathcal{Z}})$ &  Eqs.~\eqref{eq:equality_cond_circuit} and~\eqref{eq:equality_cond_k_even} (\text{when $\mathcal{Z}=\mathcal{U}$})\\
        Moment of cost function ($k$: odd) & $\displaystyle \frac{\left|\mathbb{E}_{P_{\mathcal{Z}}}[\braket{O}^k]-\mathbb{E}_{Q_{\mathcal{Z}}}[\braket{O}^k]\right|}{\|O\|_\infty^k }\le  2D_{f}^{\mathrm{str}}(P_{\mathcal{Z}},Q_{\mathcal{Z}})$ &  Eqs.~\eqref{eq:equality_cond_circuit} and~\eqref{eq:equality_cond_k_odd} (\text{when $\mathcal{Z}=\mathcal{U}$}) \\
        \bottomrule
    \end{tabular}
    \caption{Summary of results. $P_{\Theta}$ and $Q_{\Theta}$ are probability measures on the parameter space $\Theta$, and $P_{\mathcal{U}}$ and $Q_{\mathcal{U}}$ are their push-forward measures on the unitary group. We let the index $\mathcal{Z}$ range over  $\mathcal{Z}:=\{\Theta,\mathcal{U}\}$ to refer to either $(P_{\Theta},Q_{\Theta})$ and $(P_{\mathcal{U}},Q_{\mathcal{U}})$. For probability measures $P$ and $Q$, $\mathbb{E}_P[\bullet]$ and $\mathbb{E}_Q[\bullet]$ denote expectation values with respect to $P$ and $Q$, respectively. 
$ D_{f}^{\mathrm{str}}(P,Q)$ is the structural $f$-divergence that depends on the function $f$.
$O$ is a objective function and $\|O\|_\infty$ denotes the operator norm. $H_j$ is a generator of the unitary operator $U(\theta_j)$, and $\|H\|_R$ is defined as $(\lambda_H^{\max}-\lambda_H^{\min})/2$, where $\lambda_H^{\max}$ and $\lambda_H^{\min}$ denote the maximum and minimum eigenvalues of $H$.}
    \label{tab:results}
\end{table*}

\begin{figure}
\centering
\includegraphics[width=1\linewidth]{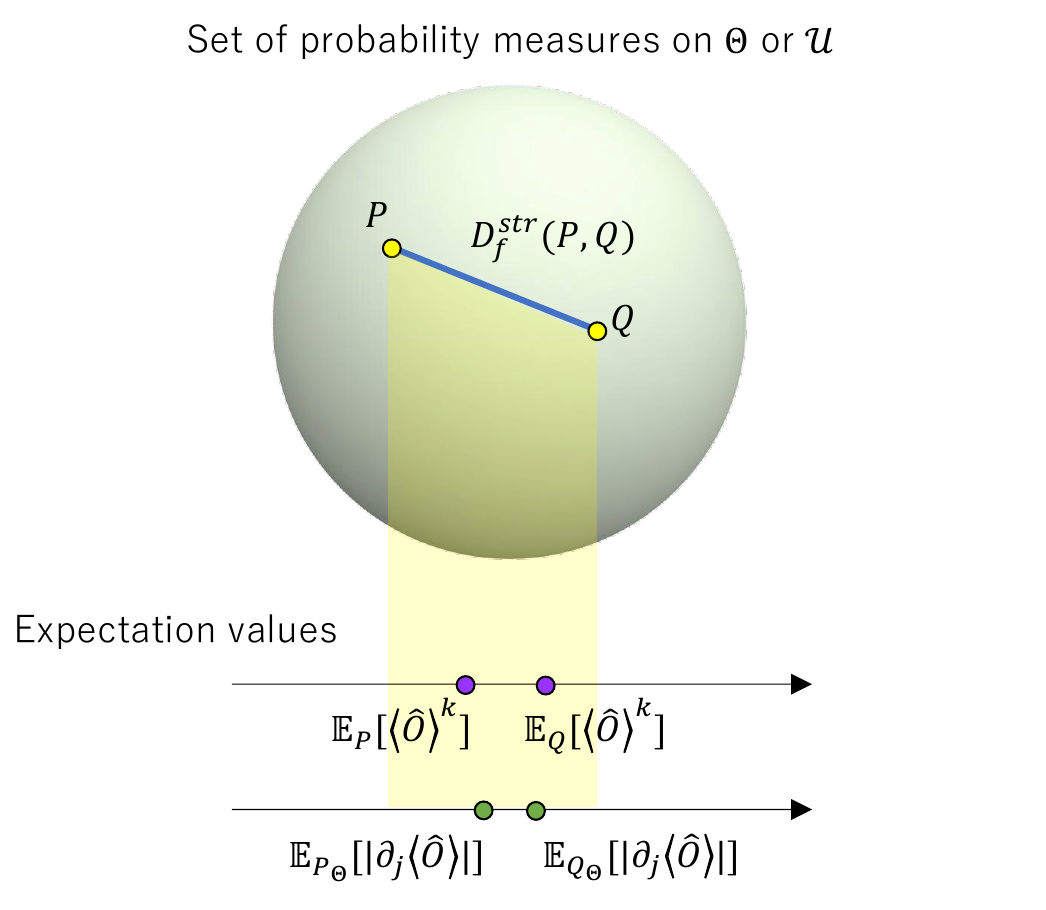}
\caption{
Conceptual diagram illustrating that the structural $f$-divergence $D_f^{\mathrm{str}}(P,Q)$ bounds both the difference in the $k$-th moments of the cost function and the difference in the expected gradient magnitudes between probability measures $P$ and $Q$ on either the parameter space $\Theta$ or the unitary group $\mathcal{U}$. Here, $\braket{O}$ denotes the cost function defined in Eq.~\eqref{eq:def_cost_func} and $\hat{O}=O/\|O\|_\infty$ denotes the operator normalized by the operator norm.
}
\label{fig:bound}
\end{figure}

\section{Preliminaries}

\subsection{Parametrized quantum circuits}

PQCs are quantum circuits with tunable gate parameters, usually used in algorithms where the parameters are optimized to solve a task.
The full circuit structure $U(\bm{\theta})$ is composed of alternating layers of gates $\{U_l(\bm{\theta}_l)\}$.
For a circuit comprising $n$ qubits and $L$ layers, the unitary operator $U(\bm{\theta})$ is defined as
\begin{align}
    U(\bm{\theta})=\prod_{l=1}^L U_l(\bm{\theta}_l)=\prod_{l=1}^L \prod_{m=1}^M e^{-i \theta_{l,m} H_{l,m}}.
    \label{def:unitary_gates}
\end{align}
Here, $\theta_{l,m}$ denotes the parameter (which can be either a variational variable or a fixed value), and $H_{l,m}$ is the corresponding Hermitian generator for the $m$-th gate in layer $l$.

The circuit starts in the quantum state $\ket{\mathrm{init}}$ and ends with measurements that estimate expectation values of observables used for optimization.
The cost function, computed from these measurement outcomes and used as the optimization objective, is
\begin{align}
    \braket{O}_{\bm{\theta}}:=\mathrm{Tr}\!\left[\rho\, U(\bm{\theta})^\dagger O\, U(\bm{\theta})\right],
    \label{eq:def_cost_func}
\end{align}
where $O$ is a Hermitian operator and $\rho:=\ket{\mathrm{init}}\bra{\mathrm{init}}$.
For notational simplicity, the subscript $\bm{\theta}$ will be omitted where no confusion can arise.

\subsection{$f$-divergence}

Let $p:=dP/d\mu$ and $q:=dQ/d\mu$ be the densities of probability measures $P$ and $Q$ with respect to a common dominating measure $\mu$.
Let $f:(0,\infty)\rightarrow \mathbb{R}$ be a convex function satisfying $f(1)=0$. The $f$-divergence~\cite{sason2016f} between probability measures $P$ and $Q$ is defined as 
\begin{align}
    D_f(P\|Q):=\int qf(p/q) d\mu,
    \label{eq:def_f_div}
\end{align}
where we adopt the convention $0f(x/0):=0$.
We further define the symmetric $f$-divergence as 
\begin{align}
    \tilde{D}_f(P,Q):=\frac{1}{2}(D_f(P\|Q)+D_f(Q\|P)).
    \label{eq:def_sym_f_div}
\end{align}
This symmetric divergence corresponds to the $f$-divergence associated with the function
\begin{align}
    \tilde{f}(x):=\frac{1}{2}\left[f(x)+xf\left(\frac{1}{x}\right)\right].
    \label{eq:sym_f}
\end{align}
Let $r\in [-1,1]$ and $s:=|r|$ be variables. 
We define two-element probability measures:
\begin{align}
    P_B:=\left\{\frac{1-r}{2}, \frac{1+r}{2}\right\}, \nonumber\\
    Q_B:=\left\{\frac{1+r}{2}, \frac{1-r}{2}\right\}. 
    \label{eq:def_two_element}
\end{align}
Let $d_f:[0,1]\rightarrow[0,\infty)$ be the symmetric $f$-divergence between two-element probability measures:
\begin{align}
    &d_f(s):=\tilde{D}_f(P_B,Q_B)\nonumber\\
    &=\frac{(1-s)}{2}f\left(\frac{1+s}{1-s}\right)+\frac{(1+s)}{2}f\left(\frac{1-s}{1+s}\right).
    \label{eq:def_binary_div}
\end{align}
Since the function $d_f$ is a monotonically increasing function (see Lemma~\ref{lem1} in Appendix~\ref{sec:proof_properties}), there exists an inverse function $d_f^{-1}$.

\section{Results}

\subsection{Structural $f$-divergence}

Let $\mathcal{F}$ be a set of functions such that
\begin{align}
    \mathcal{F}:=\{f\in C^2(0,\infty)\;|f(1)=0, f''>0\} \cup \left\{\frac{|x-1|}{2}\right\}.
    \label{eq:def_set_function}
\end{align}
Here, the function $f(x)=|x-1|/2$ corresponds to the total variation distance $d_{\mathrm{TV}}(P,Q):=1/2\int|p-q|d\mu$.
Let $P_{\Theta}$ and $Q_{\Theta}$ be probability measures on the parameter space $\Theta$.
Through the mapping defined in Eq.~\eqref{def:unitary_gates}, these measures induce the respective push-forward measures $P_{\mathcal{U}}$ and $Q_{\mathcal{U}}$ on the subset of the unitary group $\mathcal{U}(d)$ of degree $d=2^n$. For any function $h$, the push-forward measure satisfies
\begin{align}
    \int_\mathcal{U} h(U)dP_{\mathcal{U}}(U) =\int_\Theta h(U(\bm{\theta}))dP_{\Theta}(\bm{\theta}).
    \label{eq:induced_measure}
\end{align}
Throughout this paper, the subscript $\Theta$ or $\mathcal{U}$ is dropped whenever the statement holds for both $(P_{\Theta},Q_{\Theta})$ and $(P_{\mathcal{U}},Q_{\mathcal{U}})$.

To quantify the discrepancy between the probability measures $(P_{\Theta},Q_{\Theta})$ or $(P_{\mathcal{U}},Q_{\mathcal{U}})$,
we introduce the \textit{structural $f$-divergence} for $f\in\mathcal{F}$:
\begin{align}
    D_{f}^{\mathrm{str}}(P,Q):=d_f^{-1}(\tilde{D}_f(P,Q)).
    \label{eq:def_ex_div}
\end{align}
We adopt this definition involving $d_{f}^{-1}$ for several reasons (which will be discussed in detail later), primarily to ensure the divergence is normalized within $[0, 1]$ and to simplify the bounds in our subsequent main results.
When $f(x)=|x-1|/2$, we define $ D_{f}^{\mathrm{str}}(P,Q):=d_{\mathrm{TV}}(P,Q)$ from $d_f(s)=s$.
The structural $f$-divergence satisfies the following properties (see Appendix~\ref{sec:proof_properties}).
\begin{align}
    &0\le D_{f}^{\mathrm{str}}(P,Q)\le 1, \nonumber \\
    & D_{f}^{\mathrm{str}}(P,Q)=0,\; \text{if and only if} \; P=Q,\nonumber \\
    & D_{f}^{\mathrm{str}}(P,Q)= D_{f}^{\mathrm{str}}(Q,P), \nonumber\\
    & D_{f}^{\mathrm{str}}(P_{\mathcal{Z}},Q_{\mathcal{Z}})\geq D_{f}^{\mathrm{str}}(P_{\mathcal{Y}},Q_{\mathcal{Y}}).
    \label{eq:properties_div}
\end{align}
In the last inequality, the transition from $\mathcal{Z}=\{\Theta,\mathcal{U}\}$ to $\mathcal{Y}$ is characterized by a stochastic map $K$ satisfying $P_{\mathcal{Y}}(y)=\int K(y|z)dP_{\mathcal{Z}}(z)$ and $Q_{\mathcal{Y}}(y)=\int K(y|z)dQ_{\mathcal{Z}}(z)$.
The positivity and the second property are necessary properties for the divergence. The third equality shows that the structural $f$-divergence is symmetric. The fourth inequality is the data processing inequality.
Since $P_{\mathcal{U}}$ and $Q_{\mathcal{U}}$ are push-forward measures [Eq.~\eqref{eq:induced_measure}], the data processing inequality yields
\begin{align}
    D_{f}^{\mathrm{str}}(P_{\mathcal{U}},Q_{\mathcal{U}}) \le D_{f}^{\mathrm{str}}(P_{\Theta},Q_{\Theta}).
    \label{eq:data_processing_str}
\end{align}
Reference~\cite{holmes2022connecting} quantified circuit expressibility by employing the diamond norm to measure the distance between the Haar measure and the ensemble of unitary operators induced by a PQC. Based on this geometric characterization, they established fundamental trade-off relations demonstrating that increased expressibility inevitably leads to a suppression of the gradient variance. In the limit where the reference measure $Q_{\mathcal{U}}$ serves as a surrogate for the Haar measure—such as a $t$-design or a sufficiently randomized ensemble—the structural $f$-divergence acts as an expressibility-like metric that characterizes the statistical proximity of the PQC's ensemble to a random distribution.

\subsection{Tight bounds}

We show that the difference between the probability distributions of the gradient magnitude and the cost function moments is bounded by the structural $f$-divergence. Notably, these bounds are all tight.

For a function $h$, the expectation value with respect to $P_{\Theta}$ is defined as
\begin{align}
    \mathbb{E}_{P_{\Theta}}[h(\bm{\theta})] := \int h(\bm{\theta}) dP_{\Theta}(\bm{\theta}).
    \label{eq:def_expectation}
\end{align}
From Eq.~\eqref{eq:induced_measure}, the expectation value with respect to $P_{\mathcal{U}}$ satisfies 
\begin{align}
    &\mathbb{E}_{P_{\mathcal{U}}}[h(U)]:= \int h(U) dP_{\mathcal{U}}(U)=\nonumber\\
    &\int h(U(\bm{\theta})) dP_{\Theta}(\bm{\theta})=\mathbb{E}_{P_{\Theta}}[h(U(\bm{\theta}))].
    \label{eq:def_expectation}
\end{align}
For $k\in \mathbb{N}$, the $k$-th moment of the cost function is defined as
\begin{align}
    &\mathbb{E}_{P_{\Theta}}[\braket{O}^k]:=\int \mathrm{Tr}[\rho U(\bm{\theta})^{\dagger}O U(\bm{\theta})]^kdP_{\Theta}(\bm{\theta})\nonumber\\
    &=\int \mathrm{Tr}[\rho U^{\dagger}O U]^kdP_{\mathcal{U}}(U)=:\mathbb{E}_{P_{\mathcal{U}}}[\braket{O}^k].
    \label{eq:induced_relation}
\end{align}
Analogously, we consider the expectation value of the gradient with respect to $P_{\Theta}$. 
For simplicity, we define $\partial_j \braket{O}:=\partial_{\theta_{j}}\braket{O}_{\bm{\theta}}$ as the gradient corresponding to the $j$-th variational parameter $\theta_{j}$, where the index $j$ refers to the $m$-th gate in layer $l$. 
Note that the expectation value with respect to $P_{\mathcal{U}}$ cannot be defined because 
$\partial_{\theta_j}\braket{O}_{\bm{\theta}}$ is not a functional of the unitary $U(\bm{\theta})$ (see Lemma~\ref{lem4} in Appendix~\ref{sec:proof_main}).
Let $\lambda_A^{\min}, \lambda_A^{\max}$ be the minimum and maximum eigenvalues of a Hermitian operator $A$, respectively. We define $\|A\|_R:=(\lambda_A^{\max}-\lambda_A^{\min})/2$, and let $\|A\|_\infty$ be the operator norm. 
Let $I$ be the $2\times 2$ identity operator. We consider a one-qubit and one-layer ansatz $\mathcal{C}_{1,1}$ that corresponds to the case of $L=M=1$ in Eq.~\eqref{def:unitary_gates} and consists of the following components:
\begin{align}
    &H = \frac{\sigma_z}{2}, \; O=\sigma_x,\nonumber\\ 
    & \ket{\mathrm{init}}:=\frac{1}{\sqrt{2}}(\ket{0}+\ket{1}),
    \label{eq:equality_cond_circuit}
\end{align}
where $\{\sigma_{\alpha}\}$ are the Pauli matrices for $\alpha=\{x,y,z\}$ and the parameter $\theta$ is in the range $\in[0,2\pi]$.

For the structural $f$-divergence, the following two inequalities hold:
\begin{align}
    \frac{\left|\mathbb{E}_{P_{\Theta}}[\:|\partial_j \braket{O}|\:]-\mathbb{E}_{Q_{\Theta}}[\:|\partial_j \braket{O}|\:]\right|}{\|O\|_\infty}&\le 2\|H_j\|_RD_{f}^{\mathrm{str}}(P_{\Theta},Q_{\Theta}),
    \label{eq:main_result1} \\
    \frac{\left|\mathbb{E}_{P}[\braket{O}^k]-\mathbb{E}_{Q}[\braket{O}^k]\right|}{\|O\|_\infty^k }&\le C(k) D_{f}^{\mathrm{str}}(P,Q), \label{eq:main_result2} 
\end{align}
where
\begin{align}
    C(k)&:=
    \begin{cases}
        2, & (k\text{ : odd}),\\
        1, & (k\text{ : even}).
    \end{cases} 
    \label{eq:def_coeff}
\end{align}
Equation~\eqref{eq:main_result1} holds for the unitary defined in Eq.~\eqref{def:unitary_gates}, while Eq.~\eqref{eq:main_result2} holds for an arbitrary unitary $U(\bm{\theta})$.
Note that Eq.~\eqref{eq:main_result2} is valid for both $(P_{\Theta},Q_{\Theta})$ and $(P_{\mathcal{U}},Q_{\mathcal{U}})$.
Equations~\eqref{eq:main_result1} and~\eqref{eq:main_result2} are the main results of this paper. 
The proofs and the equality conditions for Eqs.~\eqref{eq:main_result1} and~\eqref{eq:main_result2} are provided in Appendix~\ref{sec:proof_main} and Appendix~\ref{sec:proof_main2}, respectively.
Within these bounds, since the right-hand side holds for any $f\in \mathcal{F}$, it can be replaced by $D^{\mathrm{str}}(P,Q):=\inf_{f\in\mathcal{F}}D_{f}^{\mathrm{str}}(P,Q)$.
For a normalized objective function $\hat{O}:=O/\|O\|_\infty$, these bounds establish that the discrepancy in expectation values for the absolute value of the gradient or the moments of the cost function is characterized by 
$\mathcal{O}\left(D^{\mathrm{str}}(P_{\Theta},Q_{\Theta})\right)$ or $\mathcal{O}\left(D^{\mathrm{str}}(P_{\mathcal{U}},Q_{\mathcal{U}})\right)$.
These results collectively imply that the structural $f$-divergence serves as a foundational metric for quantum variational landscapes. Specifically, the fact that both high-order moments and the absolute value of the gradient are tightly bounded by this divergence—independent of the specific choice of $f$
—reveals a universal connection between the information-geometric structure of the distributions and the observable behavior of the model. Since there exist ansätze for which the equality holds, our framework provides the sharpest possible characterization of how the statistical profile of a quantum circuit is constrained by the underlying probability measures.

For Eq.~\eqref{eq:main_result1}, the equality holds for the ansatz $\mathcal{C}_{1,1}$ with the two-element probability measures, regardless of the choice of $f\in\mathcal{F}$:
\begin{align}
    P_{\Theta}^B(\theta=0)=\frac{1-r}{2}, \;P_{\Theta}^B\left(\theta=\frac{\pi}{2}\right)=\frac{1+r}{2}, \nonumber \\
    Q_{\Theta}^B(\theta=0)=\frac{1+r}{2}, \;Q_{\Theta}^B\left(\theta=\frac{\pi}{2}\right)=\frac{1-r}{2}.
    \label{eq:equality_cond_grad}
\end{align}
Regarding Eq.~\eqref{eq:main_result2}, let us consider the equality condition for $(P_{\mathcal{U}},Q_{\mathcal{U}})$. 
When $k$ is even, the equality holds for the ansatz $\mathcal{C}_{1,1}$ with the two-element probability measures, which correspond to Eq.~\eqref{eq:equality_cond_grad}:
\begin{align}
    P_{\mathcal{U}}^B\left(U=I\right)=\frac{1-r}{2}, \;P_{\mathcal{U}}^B\left(U=e^{-i\pi H/2}\right)=\frac{1+r}{2}, \nonumber\\
    Q_{\mathcal{U}}^B\left(U=I\right)=\frac{1+r}{2}, \;Q_{\mathcal{U}}^B\left(U=e^{-i\pi H/2}\right)=\frac{1-r}{2}.
    \label{eq:equality_cond_k_even}
\end{align}
When $k$ is odd, the equality condition is the same as that in the even case, except that the probability measures are replaced by the following:
\begin{align}
    P_{\mathcal{U}}^B\left(U=I\right)=\frac{1-r}{2}, \;P_{\mathcal{U}}^B\left(U=e^{-i\pi H}\right)=\frac{1+r}{2}, \nonumber\\
    Q_{\mathcal{U}}^B\left(U=I\right)=\frac{1+r}{2}, \;Q_{\mathcal{U}}^B\left(U=e^{-i\pi H}\right)=\frac{1-r}{2}.
    \label{eq:equality_cond_k_odd}
\end{align}
The equality conditions for $(P_{\Theta},Q_{\Theta})$ are identical to those of Eqs.~\eqref{eq:equality_cond_k_even} and~\eqref{eq:equality_cond_k_odd}, provided that the push-forward measure $(P_{\mathcal{U}},Q_{\mathcal{U}})$ is replaced with $(P_{\Theta},Q_{\Theta})$ (see Appendix~\ref{sec:equaility_nqubit} for $n=1$). 
For simplicity, we provided examples of equality conditions for Eqs.~\eqref{eq:main_result1} and~\eqref{eq:main_result2} in the one-qubit case. However, it is also possible to construct such examples for the $n$-qubit case (see Appendix~\ref{sec:equaility_nqubit}).

We consider two examples as applications of Eqs.~\eqref{eq:main_result1} and~\eqref{eq:main_result2}. In the first example, we discuss the necessary conditions for avoiding barren plateaus or cost concentration.
Let $P_{\Theta}^{\mathrm{BP}}$ be a probability measure that exhibits a barren plateau~\cite{mcclean2018barren}, characterized by an exponentially vanishing variance of the gradient:
\begin{align}
    \mathrm{Var}_{P_{\Theta}^{\mathrm{BP}}}[\partial_j \braket{O}] \le \mathcal{O}(b^{-n}),
\end{align}
for some $b>1$, while its expectation value satisfies $\mathbb{E}_{P_{\Theta}^{\mathrm{BP}}}[\partial_j \braket{O}]= 0$. Here $\mathrm{Var}_P[\bullet]$ is the variance with respect to $P$.
Under such a measure, the typical magnitude of the gradient $\mathbb{E}_{P_{\Theta}^{\mathrm{BP}}}[\:|\partial_j \braket{O}|]$ also vanishes exponentially with $n$.
Therefore, to maintain the gradient above a threshold $g_{\mathrm{th}}>0$, the probability measure $P_{\Theta}$ must satisfy
\begin{align}
    D^{\mathrm{str}}(P_{\Theta},P_{\Theta}^{\mathrm{BP}})&\geq \frac{(g_{\mathrm{th}}-\mathbb{E}_{P_{\Theta}^{\mathrm{BP}}}[\:|\partial_j \braket{O}|\:])}{\|H_j\|_R\|O\|_\infty }\nonumber\\
    &=\frac{g_{\mathrm{th}}}{\|H_j\|_R\|O\|_\infty }+\mathcal{O}(b^{-n}).
    \label{eq:grad_thresh}
\end{align}

Similarly, we denote $P_{\Theta}^{\mathrm{CC}}$ as a probability measure that exhibits cost concentration, where the cost function $\braket{O}_{\bm{\theta}}$ concentrates exponentially around its expectation value:
\begin{align}
    \mathrm{Var}_{P_{\Theta}^{\mathrm{CC}}}[\braket{O}] \le \mathcal{O}(c^{-n}),
\end{align}
for some $c>1$. Under such a measure, the landscape of the cost function becomes increasingly flat as $n$
increases, as the probability that the cost deviates from its mean $\mathbb{E}_{P_{\Theta}^{\mathrm{CC}}}[O]$ is exponentially suppressed.
Under appropriate conditions, 
the equivalence between barren plateaus and the exponential concentration of the cost function was established in~\cite{arrasmith2022equivalence}.
Let $P^{\mathrm{CC}}_{\mathcal{U}}$ be a push-forward measure of $P_{\Theta}^{\mathrm{CC}}$.
To shift the measure $P$ away from a cost-concentrated measure $P^{\mathrm{CC}}$ such that the difference in the $k$-th moment satisfies $|\mathbb{E}_{P}[\braket{O}^k]-\mathbb{E}_{P^{\mathrm{CC}}}[\braket{O}^k]|\geq \delta$, the measure $P$ must satisfy 
\begin{align}
    & D^{\mathrm{str}}(P,P^{\mathrm{CC}})\geq \frac{\delta}{\|O\|_\infty ^k}.\label{eq:moment_thresh}
\end{align}
Here we employ Eq.~\eqref{eq:main_result2}, omitting the subscripts $\Theta$ and $\mathcal{U}$ for $P$ and $P^{\mathrm{CC}}$ as previously mentioned.

As a second example, let $P^\prime_{\Theta}$ be the noise-perturbed probability measure. We evaluate the impact of the shift from $P_{\Theta}$ to $P^\prime_{\Theta}$ on the expectation of the absolute gradient and the moments of the cost function.
Based on Eqs.~\eqref{eq:main_result1} and~\eqref{eq:main_result2}, sufficient conditions to bound the deviation of the expectation value of the absolute gradient within $g_{\mathrm{th}}$, and that of the $k$-th moment within $\delta$, are respectively given by the following inequalities for some $f\in\mathcal{F}$:
\begin{align}
    D_f^{\mathrm{str}}(P_{\Theta},P^\prime_{\Theta})&\le\frac{g_{\mathrm{th}}}{\|H_j\|_R\|O\|_\infty }, \\
    D_f^{\mathrm{str}}(P_{\Theta},P^\prime_{\Theta})&\le \frac{\delta}{\|O\|_\infty ^k}.
\end{align}
In general, as the dimension of the probability measures increases, the divergence tends to grow, making it difficult to satisfy this sufficient condition.

\subsection{Asymptotic behavior}

In this section, we consider the asymptotic behavior of Eqs.~\eqref{eq:main_result1} and~\eqref{eq:main_result2} in the limit as $P$ approaches $Q$.
Let $\Delta(P,Q)$ be the triangular discrimination~\cite{le2012asymptotic, sason2016f} defined as 
\begin{align}
    \Delta(P,Q):=\frac{1}{2}\int \frac{(p-q)^2}{p+q}d\mu.
    \label{eq:def_triangular}
\end{align}
The corresponding function $f$ is $f(x)=(x-1)^2/(2(x+1))$.
In the limit where $P$ approaches $Q$, $ D_{f}^{\mathrm{str}}(P,Q)$ reduces to $\sqrt{\Delta(P,Q)}$ for any differentiable function $f\in \mathcal{F}$ (see Appendix~\ref{sec:asymptotic}). Since $d_f(s)=s^2$ for the triangular discrimination, the structural $f$-divergence is exactly given by $ D_{f}^{\mathrm{str}}(P,Q)=\sqrt{\Delta(P,Q)}$. Therefore, Eqs.~\eqref{eq:main_result1} and~\eqref{eq:main_result2} reduce to the following bounds:
\begin{align}
     \frac{\left|\mathbb{E}_{P_{\Theta}}[\:|\partial_j \braket{O}|\:]-\mathbb{E}_{Q_{\Theta}}[\:|\partial_j \braket{O}|\:]\right|}{\|O\|_\infty}&\le 2\|H_j\|_R \sqrt{\Delta(P_{\Theta},Q_{\Theta})}, \label{eq:main1_assymp}\\
    \frac{\left|\mathbb{E}_P[\braket{O}^k]-\mathbb{E}_Q[\braket{O}^k]\right|}{\|O\|_\infty^k}&\le C(k) \sqrt{\Delta(P,Q)}.
    \label{eq:main2_assymp}
\end{align}
The second inequality holds for both $(P,Q)=(P_{\Theta},Q_{\Theta})$ and $(P_{\mathcal{U}},Q_{\mathcal{U}})$.
It should be noted that these bounds hold without the assumption that $P$ and $Q$ are close.

In cases where $P$ and $Q$ are parameterized such that $P = P(\alpha)$ and $Q = P(\alpha + \delta \alpha)$, the triangular discrimination can be written using the Fisher information $\mathcal{I}(\alpha)$ as 
\begin{align}
    \Delta(P,Q)&=\frac{(\delta \alpha)^2}{4}\mathcal{I}(\alpha) + \mathcal{O}((\delta \alpha)^3),\\
    \mathcal{I}(\alpha)&:=\int \frac{(\partial_\alpha p(\alpha))^2}{p(\alpha)}d\mu.
\end{align}

\subsection{Examples of structural $f$-divergences}

We show some examples of the structural $f$-divergences.
Let $D_{\mathrm{KL}}(P\|Q):=\int p\ln(p/q)d\mu$ be the Kullback-Leibler divergence.
\begin{itemize} 
    \item Squared Hellinger distance:
        \begin{align}
            H^2(P,Q):=\frac{1}{2}\int (\sqrt{
            p}-\sqrt{q})^2d\mu=1-\int \sqrt{pq}d\mu.
        \end{align}
         From Eqs.~\eqref{eq:def_binary_div} and~\eqref{eq:def_ex_div}, we obtain 
        \begin{align}
           D_{f}^{\mathrm{str}}(P,Q)= \sqrt{1-BC(P,Q)^2},
        \end{align}
        where $BC(P,Q):=\int \sqrt{pq}d\mu$ is the Bhattacharyya coefficient.

\item Jensen-Shannon divergence: 
         \begin{align}
             \mathrm{JS}(P,Q):=\frac{1}{2}(D_{\mathrm{KL}}(P\|M) + D_{\mathrm{KL}}(Q\|M)),
         \end{align}
         where $M:=(P+Q)/2$.
        From Eq.~\eqref{eq:def_binary_div}, it follows that
        \begin{align}
            d_f(s)=\frac{1+s}{2}\ln(1+s)+\frac{1-s}{2}\ln(1-s).
        \end{align}
    \item Jeffrey's divergence: 
        \begin{align}
            \mathrm{J}(P,Q):=\frac{1}{2}(D_{\mathrm{KL}}(P\|Q) + D_{\mathrm{KL}}(Q\|P)).
        \end{align}
        From Eq.~\eqref{eq:def_binary_div}, it follows that
        \begin{align}
            d_f(s)=s\ln \left(\frac{1+s}{1-s}\right)=2s\mathrm{artanh}(s).
        \end{align}
\end{itemize}

\section{Conclusion}
We have established a theoretical framework centered on the structural $f$-divergence to quantify the information-theoretic discrepancies within parameterized quantum circuits on both the parameter space and the induced measure of the unitary group.
By establishing bounds on the parameter space, we provide a direct means to evaluate how specific sampling strategies, such as initialization schemes or parameter-wise noise, shape the resulting cost function landscape.
Our primary contribution lies in the derivation of analytically tight trade-off relations. We have demonstrated that the structural $f$-divergence rigorously bounds both the expectation value of the gradient magnitudes and the moments of the cost function. Furthermore, by identifying specific ansatz architectures that satisfy the equality conditions, our results provide the most stringent limits possible on how statistical distributions dictate landscape behavior.
These findings suggest that structural $f$-divergence is not merely a theoretical construct but a fundamental metric for quantum model selection and architecture design. 
We anticipate that this metric will serve as a guiding principle for the optimal design of quantum circuits.

\begin{acknowledgements}
This work was supported by the Japan Society for the Promotion of Science KAKENHI Grant Numbers JP24K03008 and JP26K02998.
\end{acknowledgements}

\appendix
\begin{widetext}

\section{Proof of Eq.~\eqref{eq:properties_div} \label{sec:proof_properties}}

\begin{lemma}
    $d_f$ is monotonically increasing in $[0,1]$.
    \label{lem1}
\end{lemma}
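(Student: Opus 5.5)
Writing $x(s):=(1+s)/(1-s)$ for $s\in[0,1)$, the first term of $d_f(s)$ is $\frac{1-s}{2}f(x(s))$. The second term is $\frac{1+s}{2}f(1/x(s))$, and $\frac{1+s}{2}=\frac{1-s}{2}\,x(s)$. Hence
\begin{align}
d_f(s)=(1-s)\,\tilde f\bigl(x(s)\bigr),\qquad \tilde f(x)=\tfrac12\bigl[f(x)+xf(1/x)\bigr],
\end{align}
where $\tilde f$ is the symmetrized generator of Eq.~\eqref{eq:sym_f}.

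For $f\in C^2$ with $f''>0$ I will differentiate directly in $s$. Put $a=(1+s)/(1-s)\ge 1$, so that $1/a=(1-s)/(1+s)$. Then
\begin{align}
d_f'(s)=-\tfrac12 f(a)+\tfrac{1-s}{2}f'(a)\tfrac{2}{(1-s)^2}+\tfrac12 f(1/a)-\tfrac{1+s}{2}f'(1/a)\tfrac{2}{(1+s)^2}.
\end{align}
This simplifies to
\begin{align}
d_f'(s)=\tfrac12\bigl[f(1/a)-f(a)\bigr]+\tfrac{f'(a)}{1-s}-\tfrac{f'(1/a)}{1+s}.
\end{align}
Since $1/(1-s)=(a+1)/2$ and $1/(1+s)=(1/a+1)/2$, it can be written as
\begin{align}
2d_f'(s)=\bigl[f(1/a)-f(a)\bigr]+(a+1)f'(a)-(1/a+1)f'(1/a).
\end{align}

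To sign this expression I will use the convexity tangent inequalities at the two points $a$ and $1/a$. Convexity gives $f(1/a)\ge f(a)+f'(a)(1/a-a)$, and also $f(a)\ge f(1/a)+f'(1/a)(a-1/a)$. Adding $f'(a)(a-1/a)$ to the latter yields
\begin{align}
f(1/a)-f(a)\ge -f'(a)(a-1/a).
\end{align}
A cleaner route is to define $g(a):=2d_f'$ as a function of $a$ and check $g(1)=0$. Differentiating in $a$, the terms $-f'(a)$ and $(a+1)^{-1}$-type pieces cancel, leaving something of the form
\begin{align}
g'(a)=(a+1)f''(a)+a^{-3}(1+a)f''(1/a)>0.
\end{align}
I will verify this carefully, since it is the main computation. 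We have $\frac{d}{da}f(1/a)=-a^{-2}f'(1/a)$ and $\frac{d}{da}\bigl[-(1/a+1)f'(1/a)\bigr]=a^{-2}f'(1/a)+(1/a+1)a^{-2}f''(1/a)$. The $f'(1/a)$ terms cancel, and similarly $-f'(a)+f'(a)$ cancels. This gives
\begin{align}
g'(a)=(a+1)\bigl[f''(a)+a^{-3}f''(1/a)\bigr]>0.
\end{align}
Hence $g(a)>g(1)=0$ for $a>1$, so $d_f'(s)>0$ on $(0,1)$ and $d_f$ is strictly increasing. Strictness is what later allows the inverse $d_f^{-1}$ to be used.

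Three remaining points need handling. First, $d_f(0)=f(1)=0$. Second, continuity at $s=1$: the value $d_f(1)$ should be understood as the limit $\lim_{s\to1}$, possibly $+\infty$ as for Jeffrey's divergence, or via the convention $0f(\cdot/0)$, i.e.\ $f(0^+)+\lim_{x\to\infty}f(x)/x$. Monotonicity on $[0,1)$ extends to $[0,1]$ by taking limits. Third, the total variation case $f=|x-1|/2$ is trivial, since $d_f(s)=s$. The expected obstacle is bookkeeping in the derivative and the endpoint $s=1$ convention; the sign argument itself is immediate from $f''>0$.
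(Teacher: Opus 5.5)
Your proof is correct, but the decisive sign step differs from the paper's. Both proofs start from the same derivative: your $2d_f'(s)=[f(1/a)-f(a)]+(a+1)f'(a)-(1/a+1)f'(1/a)$ is Eq.~\eqref{eq:diff_df} rewritten with $a=v(s)$.

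The paper then stays at first order. The tangent inequality $f(1/a)-f(a)\ge f'(a)(1/a-a)$ gives $2d_f'(s)\ge(1+1/a)\bigl(f'(a)-f'(1/a)\bigr)$, which is positive because $f'$ is strictly increasing.

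You instead differentiate once more in $a$ and use $g(1)=0$. Since $da/ds=2/(1-s)^2$, your identity $g'(a)=(a+1)[f''(a)+a^{-3}f''(1/a)]$ is exactly $d_f''(s)=\frac{2}{(1-s)^3}f''(v(s))+\frac{2}{(1+s)^3}f''(v(-s))$, i.e.\ Eq.~\eqref{eq:d_f_convex}. So your argument amounts to ``$d_f$ is strictly convex with $d_f'(0)=0$.''

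The paper's route needs only a differentiable $f$ with strictly increasing $f'$, so it does not use a second derivative. Yours uses the full $C^2$, $f''>0$ hypothesis built into $\mathcal{F}$. In return, it establishes in the same stroke the convexity of $d_f$, which the paper computes separately for Lemma~\ref{lem2}.

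Your abandoned tangent-inequality branch was one line short of the paper's proof. However, the sentence deriving $f(1/a)-f(a)\ge -f'(a)(a-1/a)$ by ``adding to the latter'' is wrong: the second tangent inequality bounds $f(1/a)-f(a)$ from above. The displayed lower bound follows directly from the first tangent inequality, so clean that up or delete the branch.

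Your remarks on $s=1$ (where $d_f$ may be $+\infty$, as for Jeffrey's divergence) and on the total-variation case go beyond the paper, which argues only on the open interval $(0,1)$ for smooth $f$.
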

\begin{proof}   
    Letting $v(s):=(1+s)(1-s)=2/(1-s)-1$ for $s\in (0,1)$, and differentiating Eq.~\eqref{eq:def_binary_div} yields
    \begin{align}
        d_f^\prime(s)=\frac{1}{2}\left(f(v(-s))-f(v(s))\right)+ \frac{1}{1-s}f^\prime(v(s))-\frac{1}{1+s}f^\prime(v(-s)).
        \label{eq:diff_df}
    \end{align}
Since $f(v(-s))-f(v(s))> f^\prime(v(s))(v(-s)-v(s))=2 f^\prime(v(s))(1/(1+s)-1/(1-s))$ holds from the strict convexity of $f$, 
substituting this relation into Eq.~\eqref{eq:diff_df} yields
\begin{align}
    d_f^\prime(s)>\frac{1}{1+s}\left(f^\prime(v(s))-f^\prime(v(-s))\right)> 0,
    \label{eq:d_f_monotonicity}
\end{align}
where we use the monotonicity of $f^\prime(s)$ for $v(s)> v(-s)$ for $s\in (0,1)$.
\end{proof}
We prove Eq.~\eqref{eq:properties_div}. The first equality follows from the domain of $d_f$.
The second equality follows from the property of the $f$-divergence and $d_f(0)=0$ since $f(1)=0$. The third equality follows from the definition. The fourth equality follows from the data processing inequality for the $f$-divergence and Lemma~\ref{lem1}.

\section{Proof of Eq.~\eqref{eq:main_result1} \label{sec:proof_main}}

\subsection{Lemmas}

Before the proof, we prove the following lemmas. 
\begin{lemma}
For any function $f\in \mathcal{F}$, 
   \begin{align}
        d_f(t)=\inf_{d_{\mathrm{TV}}(P,Q)=t}\tilde{D}_f(P,Q).
    \end{align}
    The infimum is attained by $(P_B, Q_B)$ defined by Eq.~\eqref{eq:def_two_element}.
    \label{lem2}
\end{lemma}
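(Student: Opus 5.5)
The plan is to reduce to two-point distributions via the data processing inequality, and then to solve a one-dimensional convex minimization problem using a symmetry argument.

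For the total variation generator $f(x)=|x-1|/2$, we have $\tilde D_f=d_{\mathrm{TV}}$ and $d_f(s)=s$, so the claim is immediate. Assume therefore $f\in C^2$ with $f''>0$. First, I will check that $(P_B,Q_B)$ of Eq.~\eqref{eq:def_two_element} with $r=t$ is feasible. Indeed, $d_{\mathrm{TV}}(P_B,Q_B)=\frac12(|{-t}|+|t|)=t$, and by definition $\tilde D_f(P_B,Q_B)=d_f(t)$. Hence the infimum is at most $d_f(t)$, and it only remains to show $\tilde D_f(P,Q)\ge d_f(t)$ for every pair with $d_{\mathrm{TV}}(P,Q)=t$.

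Next, given arbitrary $P,Q$ with densities $p,q$, I will apply the deterministic (hence stochastic) map $K$ sending a point to the indicator of the set $A:=\{p>q\}$. The image measures are the two-point distributions $P'=(a,1-a)$ and $Q'=(b,1-b)$ with $a=P(A)$ and $b=Q(A)$. By the definition of $A$, $a-b=\int_A(p-q)\,d\mu=d_{\mathrm{TV}}(P,Q)=t$. Since $\tilde D_f$ is the $f$-divergence with generator $\tilde f$ of Eq.~\eqref{eq:sym_f}, which is convex with $\tilde f(1)=0$, the data processing inequality for $f$-divergences gives $\tilde D_f(P,Q)\ge \tilde D_f(P',Q')$. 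The problem is thus reduced to minimizing
\begin{align}
g(a):=\tilde D_f\bigl((a,1-a),(a-t,1-a+t)\bigr)
\end{align}
over $a\in[t,1]$.

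The key step is to show that $g$ is minimized at $a=(1+t)/2$, which gives exactly $g=d_f(t)$. I will use two facts.

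\textbf{Convexity.} The perspective map $(x,y)\mapsto y\tilde f(x/y)$ is jointly convex. Each term of $g$ is this perspective evaluated at an affine function of $a$ (namely $(a,a-t)$ and $(1-a,1-a+t)$), so $g$ is convex on $[t,1]$.

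\textbf{Symmetry.} Relabeling the two outcomes maps the pair to $((1-a,a),(1-a+t,a-t))$ without changing the divergence. Then swapping the roles of $P$ and $Q$, which is allowed because $\tilde D_f$ is symmetric, yields the pair $((1-a+t,a-t),(1-a,a))$. This pair again has first-component difference $t$, so $g(a)=g(1+t-a)$.

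A convex function on an interval that is invariant under the reflection $a\mapsto 1+t-a$ attains its minimum at the fixed point $a^\ast=(1+t)/2$. To see this, write $g(a^\ast)\le \tfrac12 g(a)+\tfrac12 g(1+t-a)=g(a)$. At $a^\ast$ we get $b=(1-t)/2$, so the pair is exactly $(P_B,Q_B)$ up to relabeling, and $g(a^\ast)=d_f(t)$. Combining this with the data processing step proves the lower bound, and the upper bound from the first step shows that the infimum is attained by $(P_B,Q_B)$.

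The main obstacle I expect is technical care at the boundary rather than the core argument. When $t=1$, or when $a=t$ or $a=1$, some of the probabilities vanish. There the conventions $0f(x/0)=0$ (and the usual extension of the perspective to $y=0$) must be handled, and the divergence may take the value $+\infty$. I plan to deal with this by noting that the perspective function extended by lower semicontinuity remains convex with values in $(-\infty,\infty]$. This is enough for the midpoint inequality used above. For $t=1$, the only feasible point is $a=1$, $b=0$, which coincides with $(P_B,Q_B)$, so that case is trivially consistent.
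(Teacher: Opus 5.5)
Your proof is correct, and it reaches $\tilde D_f(P,Q)\ge d_f\bigl(d_{\mathrm{TV}}(P,Q)\bigr)$ by a genuinely different route from the paper's.

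The paper never reduces to a two-point space. It derives $d_f''>0$ from $f''>0$ [Eq.~\eqref{eq:d_f_convex}] and combines this with Lemma~\ref{lem1} to conclude that $d_f^{-1}$ is concave. It then uses the pointwise identity $\tfrac12\bigl[qf(p/q)+pf(q/p)\bigr]=\tfrac{p+q}{2}\,d_f\bigl(|p-q|/(p+q)\bigr)$. This makes $d_{\mathrm{TV}}(P,Q)$ and $\tilde D_f(P,Q)$ the averages of $s(x)=|p-q|/(p+q)$ and of $d_f(s(x))$ under the probability density $(p+q)/2$. One Jensen step for the concave $d_f^{-1}$, followed by monotonicity of $d_f$, gives $d_f(t)\le\tilde D_f(P,Q)$.

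You instead quantize by the indicator of $\{p>q\}$, which preserves total variation exactly, and invoke data processing for $\tilde f$. You then settle a one-variable problem by joint convexity of the perspective plus the reflection symmetry $g(a)=g(1+t-a)$.

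What your route buys is economy of hypotheses. It uses only convexity of $f$ and $f(1)=0$. It needs neither $f\in C^2$, nor strict convexity, nor Lemma~\ref{lem1}, nor convexity of $d_f$; the paper's Jensen step needs all of these so that $d_f^{-1}$ exists and is concave. It would also cover the total-variation generator without a separate case.

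The price is reliance on two external facts, data processing and joint convexity of the perspective, where the paper's argument is a self-contained Jensen inequality. The paper's route also shows \emph{why} the symmetric binary pair is extremal: $\tilde D_f$ is literally a $(p+q)/2$-average of $d_f$ at the local total variation $s(x)$. This is the same representation that motivates defining $D_f^{\mathrm{str}}$ through $d_f^{-1}$.

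Your explicit use of the lower-semicontinuous perspective at vanishing densities is the right convention for the data-processing step. The paper's literally stated convention $0f(x/0):=0$ would not support it.
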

    The proof for differentiable $f$ is provided in Ref.~\cite{sason2015tight, gilardoni2010pinsker, CRMATH_2006__343_11-12_763_0}. For completeness, we briefly present an alternative proof below. For the full derivation, see the original papers.
    \begin{proof}
    Since the case where the $f$-divergence is the total variation distance is trivial as $d_f(t)=t$, we focus on the cases where $f(1)=0$ and $f''(x)>0$ in Eq.~\eqref{eq:def_set_function}.
    By differentiating Eq.~\eqref{eq:diff_df}, the convexity of $f$ implies that for any $s\in (0,1)$, 
    \begin{align}
        d_f^{\prime\prime}(s)= \frac{2}{(1-s)^3}f^{\prime\prime}(v(s))+\frac{2}{(1+s)^3}f^{\prime\prime}(v(-s))>0.
        \label{eq:d_f_convex}
    \end{align}
    Let $g:=d_f^{-1}$. For $x=d_f(s)$, the second derivative of $g$ is given by
    \begin{align}
        \frac{d^2}{dx^2}g(x)=-\frac{d_f^{\prime\prime}(s)}{(d_f^\prime(s))^3}.
    \end{align}
    Combining this with Eqs.~\eqref{eq:d_f_monotonicity} and~\eqref{eq:d_f_convex}, it follows that $g$ is concave.
    By substituting $s=|p-q|/(p+q)\in [0,1]$ into the identity $g(d_f(s))=s$, multiplying by $(p+q)/2$ and integrating with respect to the dominating measure $\mu$, we obtain
    \begin{align}
        \int \frac{(p+q)}{2}g\left(d_f\left(\frac{|p-q|}{p+q}\right)\right)d\mu=\frac{1}{2}\int |p-q| d\mu=d_{\mathrm{TV}}(P,Q)=t.
    \end{align}
    Since $g$ is concave, the Jensen's inequality implies
      \begin{align}
        t\le g\left( \int\frac{(p+q)}{2}d_f\left(\frac{|p-q|}{p+q}\right)d\mu\right)=g(\tilde{D}_f(P,Q)),
    \end{align}
    where we use Eq.~\eqref{eq:def_binary_div} in the last equality. From the montonicity of $d_f$ (Lemma~\ref{lem1}), we obtain 
    \begin{align}
        d_f\left(t\right)\le\tilde{D}_f(P,Q).
    \end{align}
    From $t=d_{\mathrm{TV}}(P_B,Q_B)=|r|$ and Eq.~\eqref{eq:def_binary_div}, the infimum is attained by $(P_B, Q_B)$.
\end{proof}

\begin{lemma}
    Let $X\geq 0$ be a random variable, and let $X_{\max}\geq 0$ be the maximum value of $X$.
    For probability measures $P$ and $Q$ for the random variable $X$, 
    \begin{align}
        |\mathbb{E}_P[X]-\mathbb{E}_Q[X]|\le X_{\max} d_{\mathrm{TV}}(P,Q).
        \label{eq:TVdist_ub}
    \end{align}
    The equality holds if 
    \begin{align}
        P_B(X=0)=\frac{1-r}{2}, \;P_B(X=X_{\max})=\frac{1+r}{2},\\
        Q_B(X=0)=\frac{1+r}{2}, \;Q_B(X=X_{\max})=\frac{1-r}{2}.
    \end{align}
    \label{lem3}
\end{lemma}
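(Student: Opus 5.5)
The plan is to write the difference of expectations as a single integral against the signed measure $P-Q$, then shift the integrand by a constant. With densities $p,q$ with respect to a common dominating measure $\mu$, we have
\begin{align}
    \mathbb{E}_P[X]-\mathbb{E}_Q[X]=\int X\,(p-q)\,d\mu=\int \left(X-\frac{X_{\max}}{2}\right)(p-q)\,d\mu .
\end{align}
The second equality holds because $\int (p-q)\,d\mu=0$. Since $0\le X\le X_{\max}$, the shifted integrand obeys $|X-X_{\max}/2|\le X_{\max}/2$ pointwise. Taking absolute values inside the integral then gives
\begin{align}
    |\mathbb{E}_P[X]-\mathbb{E}_Q[X]|\le \frac{X_{\max}}{2}\int|p-q|\,d\mu = X_{\max}\,d_{\mathrm{TV}}(P,Q).
\end{align}
This is Eq.~\eqref{eq:TVdist_ub}. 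An equivalent route splits the domain into $A:=\{p>q\}$ and its complement. One then bounds $\int_A X(p-q)\,d\mu\le X_{\max}\int_A(p-q)\,d\mu = X_{\max} d_{\mathrm{TV}}(P,Q)$, and notes that the contribution from the complement is nonpositive because $X\ge0$. Applying the same argument with $P$ and $Q$ swapped handles the other sign. Either way, the only ingredients are $X\ge 0$, boundedness by $X_{\max}$, and normalization of $P$ and $Q$.

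For the equality case, we evaluate both sides directly on the two-point measures $(P_B,Q_B)$, which are supported on $\{0,X_{\max}\}$. On the left, $\mathbb{E}_{P_B}[X]-\mathbb{E}_{Q_B}[X]=X_{\max}\left(\frac{1+r}{2}-\frac{1-r}{2}\right)=rX_{\max}$, so the left-hand side equals $|r|X_{\max}$. On the right, $d_{\mathrm{TV}}(P_B,Q_B)=\frac12\left(|r|+|r|\right)=|r|$. The two sides therefore coincide. If $X_{\max}=0$, both sides vanish trivially.

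There is no real obstacle in this argument. The only point needing care is the bookkeeping in the measure-theoretic setting: treating $X$ as a measurable function on the underlying space (parameters or unitaries) and using the convention for the common dominating measure $\mu$. The bound is then in exactly the form needed later, namely with $X=|\partial_j\braket{O}|$ and $X_{\max}=2\|H_j\|_R\|O\|_\infty$.
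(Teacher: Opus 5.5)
Your proof is correct. Your second route is exactly the paper's argument: restrict to $\{p>q\}$, bound that piece by $X_{\max}\,d_{\mathrm{TV}}(P,Q)$, discard the complement because $X\ge 0$, and swap $P$ and $Q$ for the other sign. The paper also spells out why $\int_{p\ge q}(p-q)\,d\mu=\int_{q>p}(q-p)\,d\mu=d_{\mathrm{TV}}(P,Q)$, which follows from normalization and which you use without comment.

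Your primary route is a genuinely different and somewhat more economical decomposition. You center the integrand at $X_{\max}/2$ and then use $\bigl|\int g\,(p-q)\,d\mu\bigr|\le \sup|g|\int|p-q|\,d\mu$. This never uses the sign of $X$, only the width of its range, so it gives $|\mathbb{E}_P[X]-\mathbb{E}_Q[X]|\le (b-a)\,d_{\mathrm{TV}}(P,Q)$ for any $a\le X\le b$ in one line. In particular, it yields the paper's next lemma (constant $2X_{\max}$ for $-X_{\max}\le X\le X_{\max}$) directly. The paper instead obtains that case by applying the present lemma to the shifted variable $X+X_{\max}\in[0,2X_{\max}]$.

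In return, the paper's one-sided split makes the saturating configuration readable at a glance: equality forces $X=X_{\max}$ where $p>q$ and $X=0$ where $p<q$, or the reverse. Your centering inequality gives the same condition once you ask when $|X-X_{\max}/2|=X_{\max}/2$ holds with aligned signs. Your explicit check on $(P_B,Q_B)$, giving $|r|X_{\max}$ on both sides, is the same as the paper's.
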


\begin{proof}
    Since $\int_{p\geq q} p d\mu + \int_{q>p} pd\mu=1$, we obtain $\int_{p\geq q} (p-q)d\mu=\int_{q>p} (q-p)d\mu$. This relation yields
    $d_{\mathrm{TV}}(P,Q)=\int_{p\geq q} (p-q)d\mu=\int_{q>p} (q-p)d\mu$.
    By combining $\mathbb{E}_P[X]-\mathbb{E}_Q[X]=\int x(p-q)d\mu\le \int_{p\geq q} x(p-q)d\mu \le X_{\max} d_{\mathrm{TV}}(P,Q)$ and $\mathbb{E}_P[X]-\mathbb{E}_Q[X]\geq  \int_{q>p} x(p-q)d\mu \geq -X_{\max} d_{\mathrm{TV}}(P,Q)$, we obtain Eq.~\eqref{eq:TVdist_ub}. From $|\mathbb{E}_P[X]-\mathbb{E}_Q[X]|=|r|X_{\max}$ and $d_{\mathrm{TV}}(P_B,Q_B)=|r|$, the equality condition follows.
\end{proof}

\begin{lemma}
    Let the index $j$ denote the $m$-th gate in layer $l$. 
For the gradient with respect to 
    $\theta_j$, the following inequality holds:
    \begin{align}
        |\partial_j \braket{O}|\le 2\|H_j\|_R\|O\|_\infty.
        \label{eq:grad_ub}
    \end{align} 
    The equality is achieved by the ansatz $\mathcal{C}_{1,1}$ at $\theta=\pi/2$.
    
    \label{lem4}
\end{lemma}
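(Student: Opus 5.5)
Write $U(\bm{\theta}) = A\, e^{-i\theta_j H_j} B$, where $B$ collects all gates applied before the $j$-th gate and $A$ all gates applied after it. Neither $A$ nor $B$ depends on $\theta_j$. Set
\begin{align}
\rho_j := e^{-i\theta_j H_j} B\rho B^\dagger e^{i\theta_j H_j}, \qquad O_j := A^\dagger O A .
\end{align}
Then $\braket{O} = \mathrm{Tr}[\rho_j O_j]$, and differentiating the exponential gives
\begin{align}
\partial_j \braket{O} = i\,\mathrm{Tr}\bigl([H_j,\rho_j]\, O_j\bigr) = i\,\mathrm{Tr}\bigl(\rho_j\, [O_j, H_j]\bigr).
\end{align}
The key observation is that the commutator does not change if $H_j$ is shifted by a multiple of the identity. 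So we may replace $H_j$ by $H_j' := H_j - \tfrac{1}{2}(\lambda_{H_j}^{\max} + \lambda_{H_j}^{\min}) I$, whose spectrum is symmetric, so that $\|H_j'\|_\infty = \|H_j\|_R$.

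For the bound, $\rho_j$ is a density matrix, hence
\begin{align}
|\partial_j\braket{O}| \le \|[O_j, H_j']\|_\infty \le 2\|O_j\|_\infty \|H_j'\|_\infty = 2\|O\|_\infty \|H_j\|_R .
\end{align}
The last step uses the unitary invariance of the operator norm. This is the whole inequality. The only point requiring care is the shift trick, which is what produces $\|H_j\|_R$ in place of the weaker $\|H_j\|_\infty$.

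For equality, take $\mathcal{C}_{1,1}$ with $H = \sigma_z/2$, so $\|H\|_R = 1/2$, and $O = \sigma_x$, so $\|O\|_\infty = 1$. The right-hand side is then $1$. Starting from $\ket{+}$, the state $e^{-i\theta\sigma_z/2}\ket{+}$ has Bloch vector $(\cos\theta, \sin\theta, 0)$. Hence $\braket{O} = \cos\theta$ and $\partial_\theta\braket{O} = -\sin\theta$, whose absolute value is $1$ at $\theta = \pi/2$. The bound is therefore saturated.

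The main obstacle is bookkeeping: deriving the derivative formula correctly with the paper's product ordering, and checking the sign conventions. None of the estimates is delicate.
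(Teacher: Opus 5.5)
Your proof is correct and follows essentially the same route as the paper: isolate the $j$-th gate to get a commutator formula, shift $H_j$ by the midpoint of its spectrum so that $\|H_j'\|_\infty=\|H_j\|_R$, bound via H\"older with $\|\rho_j\|_1=1$ and unitary invariance of $\|\cdot\|_\infty$, and check saturation on $\mathcal{C}_{1,1}$ at $\theta=\pi/2$. The only slip is an overall sign in your derivative formula: differentiating gives $\partial_j\braket{O} = -i\,\mathrm{Tr}([H_j,\rho_j]O_j) = i\,\mathrm{Tr}(\rho_j[H_j,O_j])$, which matches the paper's $i\,\mathrm{Tr}[[H_j,O^{+}]\rho^{-}]$, but this is harmless since only the absolute value enters.
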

\begin{proof}

We define $U^{-}$ as the product of unitary operators with indices smaller than $j$, and $U^{+}$ as the product of those with indices larger than $j$.
The gradient of $U$ is calculated as 
\begin{align}
    \partial_j U(\bm{\theta})=-iU^{+} H_j U^{-}.
\end{align}
Letting $O^{+}:={(U^{+})}^\dagger OU^{+}$ and $\rho^{-}:={U^{-}} \rho {(U^{-})}^\dagger$, we obtain
    \begin{align}
        \partial_j \braket{O}=\mathrm{Tr}[\rho \partial_j U(\bm{\theta})^\dagger O U(\bm{\theta})]+\mathrm{Tr}[\rho U(\bm{\theta})^\dagger O \partial_j U(\bm{\theta})]=i\mathrm{Tr}[[H_j,O^{+}]\rho^{-}].
    \end{align}
By applying $|\mathrm{Tr}[A]|\le \|A\|_1$ and the H\"{o}lder's inequality, it follows that
\begin{align}
    &|\mathrm{Tr}[[H_j,O^{+}]\rho^{-}]|=|\mathrm{Tr}[[H_j-a,O^{+}]\rho^{-}]|\le \|[H_j-a,O^{+}]\rho^{-}\|_1\nonumber\\
    &\le (\|(H_j-a)O^{+}\|_\infty+\|O^{+}(H_j-a)\|_\infty)\|\rho^{-}\|_1\le2\|(H_j-a)\|_\infty\|O\|_\infty.
\end{align}
Substituting $a=(\lambda^H_{\max}+\lambda^H_{\min})/2$ into this relation, we obtain Eq.~\eqref{eq:grad_ub}. 

We next prove the equality condition. Recall that $\mathcal{C}_{1,1}$ is defined by Eq.~\eqref{eq:equality_cond_circuit}. From $[H,O]=i\sigma_y$, $U(\theta)^\dagger \sigma_y U(\theta)=\sigma_y \cos\theta+\sigma_x\sin\theta$, $\braket{\mathrm{init|\sigma_x|\mathrm{init}}}=1$, and $\braket{\mathrm{init|\sigma_y|\mathrm{init}}}=0$, we obtain
\begin{align}
    \partial O=i\mathrm{Tr}[\rho U(\theta)^\dagger [H,O] U(\theta)]=-\sin\theta.
    \label{eq:equality_cond_theta}
\end{align}
By combining $H=\sigma_z/2$, $O=\sigma_x$ with $\|\sigma_z\|_R=1$, $\|\sigma_x\|_\infty=1$, it follows that $|\partial O|=1=2\|H\|_R \|O\|_\infty$ at $\theta=\pi / 2$.
\end{proof}

\subsection{Proof}

For a random variable $0\le X\le X_{\max}$, by applying Lemma~\ref{lem1}--Lemma~\ref{lem3}, it follows that
\begin{align}
    \tilde{D}_f(P,Q)\geq d_f(d_{\mathrm{TV}}(P,Q))\geq d_f\left(\frac{|\mathbb{E}_P[X]-\mathbb{E}_Q[X]|}{X_{\max}}\right).
    \label{eq:f_div_binary_div}
\end{align}
Substituting $X=|\partial_j \braket{O}|$, $P=P_{\Theta}$, and $Q=Q_{\Theta}$ into Eq.~\eqref{eq:f_div_binary_div} and using Lemma~\ref{lem1} and Lemma~\ref{lem4}, we obtain
\begin{align}
    \tilde{D}_f(P_{\Theta},Q_{\Theta})\geq d_f\left(\frac{ \left|\mathbb{E}_{P_{\Theta}}[\:|\partial_j \braket{O}|\:]-\mathbb{E}_{Q_{\Theta}}[\:|\partial_j \braket{O}|\:]\right|}{2\|H_j\|_R\|O\|_\infty}\right).
    \label{eq:f_div_expectation}
\end{align}
When $f(x)=|x-1|/2$, the same inequality follows from $d_f(s)=s$, Lemma~\ref{lem3} and~\ref{lem4}.
Since Eq.~\eqref{eq:f_div_expectation} holds for any $f\in\mathcal{F}$, from Lemma~\ref{lem1}, we obtain Eq.~\eqref{eq:main_result1}.

We next prove that Eq.~\eqref{eq:main_result1} is tight. We show an example of equality condition. From Eqs.~\eqref{eq:f_div_binary_div} and~\eqref{eq:f_div_expectation}, the equality condition for Eq.~\eqref{eq:main_result1} holds if and only if all equality conditions for Lemma~\ref{lem2} through~\ref{lem4} are satisfied.
The  ansatz  $\mathcal{C}_{1,1}$ satisfies the equality conditions for Lemma~\ref{lem4}.
From Lemma~\ref{lem2},~\ref{lem3} and Eq.~\eqref{eq:equality_cond_theta}, the equality conditions for the probability measures are given by Eq.~\eqref{eq:equality_cond_grad}, regardless the choice of $f$.

\section{Proof of Eq.~\eqref{eq:main_result2}\label{sec:proof_main2}}

Before the proof, we prove the following lemmas.
\subsection{Lemmas}
\begin{lemma}
    Let $-X_{\max} \le X\le X_{\max}$ be a random variable.
    For probability measures $P$ and $Q$ for the random variable $X$, 
    \begin{align}
        |\mathbb{E}_P[X]-\mathbb{E}_Q[X]|\le 2X_{\max} d_{\mathrm{TV}}(P,Q).
        \label{eq:TVdist_ub2}
    \end{align}
    The equality holds if 
    \begin{align}
        P_B(X=-X_{\max})=\frac{1-r}{2}, \;P_B(X=X_{\max})=\frac{1+r}{2},\\
        Q_B(X=-X_{\max})=\frac{1+r}{2}, \;Q_B(X=X_{\max})=\frac{1-r}{2}.
    \end{align}
    \label{lem5}
\end{lemma}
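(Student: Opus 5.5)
The plan is to reduce Lemma~\ref{lem5} to Lemma~\ref{lem3} by shifting the random variable. Define $Y:=X+X_{\max}$. Then $0\le Y\le 2X_{\max}$, so $Y$ is a nonnegative random variable whose maximum value is $Y_{\max}=2X_{\max}$. Since $P$ and $Q$ are both probability measures, the constant shift cancels in the difference of expectations:
\begin{align}
    \mathbb{E}_P[Y]-\mathbb{E}_Q[Y]=\mathbb{E}_P[X]-\mathbb{E}_Q[X].
\end{align}
Applying Eq.~\eqref{eq:TVdist_ub} to $Y$ then gives $|\mathbb{E}_P[X]-\mathbb{E}_Q[X]|\le 2X_{\max}d_{\mathrm{TV}}(P,Q)$, which is Eq.~\eqref{eq:TVdist_ub2}.

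Alternatively, the argument can be made self-contained by repeating the proof of Lemma~\ref{lem3} directly. Use $d_{\mathrm{TV}}(P,Q)=\int_{p\ge q}(p-q)d\mu=\int_{q>p}(q-p)d\mu$. Writing $\int x(p-q)d\mu$ as the sum of these two regions and bounding $x\le X_{\max}$ on the first and $x\ge -X_{\max}$ on the second, we get
\begin{align}
    \mathbb{E}_P[X]-\mathbb{E}_Q[X]\le X_{\max}d_{\mathrm{TV}}(P,Q)+X_{\max}d_{\mathrm{TV}}(P,Q)=2X_{\max}d_{\mathrm{TV}}(P,Q).
\end{align}
The reverse inequality follows symmetrically. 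This makes clear where the factor $2$ comes from: both regions contribute.

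For the equality condition, take the two-point measures in the statement. A direct computation gives
\begin{align}
    \mathbb{E}_{P_B}[X]=\frac{1+r}{2}X_{\max}-\frac{1-r}{2}X_{\max}=rX_{\max},\qquad \mathbb{E}_{Q_B}[X]=-rX_{\max},
\end{align}
so $|\mathbb{E}_{P_B}[X]-\mathbb{E}_{Q_B}[X]|=2|r|X_{\max}$. Since $d_{\mathrm{TV}}(P_B,Q_B)=|r|$, as already used in Lemma~\ref{lem3}, the bound is attained.

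There is no real obstacle. The only point needing care is that the shift by a constant leaves the difference of expectations unchanged, which holds because both measures have total mass one.
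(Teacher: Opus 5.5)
Your proposal is correct and is essentially the paper's proof, which also applies Lemma~\ref{lem3} to the shifted variable $Y=X+X_{\max}$ (with upper bound $2X_{\max}$); the equality case is then Lemma~\ref{lem3}'s two-point condition on $Y\in\{0,2X_{\max}\}$. Your region-splitting argument and your direct computation of the equality case are both valid.
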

Applying Lemma~\ref{lem3} for a random variable $Y=X+X_{\max}$, the result immediately follows.

\begin{lemma}
    The following inequality holds.
    \begin{align}
        |\braket{O}|\le \|O\|_\infty.
        \label{eq:operator_norm_ub}
    \end{align}
    The equality is achieved by the ansatz $\mathcal{C}_{1,1}$ at $\theta=0$.
    \label{lem6}
\end{lemma}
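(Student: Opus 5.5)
The inequality itself follows from Hölder's inequality. Write $\braket{O}=\mathrm{Tr}[\rho\,U^\dagger O U]$ with $\rho=\ket{\mathrm{init}}\bra{\mathrm{init}}$, and apply $|\mathrm{Tr}[A]|\le\|A\|_1$ together with $\|AB\|_1\le\|A\|_\infty\|B\|_1$, exactly as in the proof of Lemma~\ref{lem4}. This gives
\begin{align}
|\braket{O}|\le \|U^\dagger O U\,\rho\|_1\le \|U^\dagger O U\|_\infty\|\rho\|_1=\|O\|_\infty ,
\end{align}
where the last equality uses the unitary invariance of the operator norm and $\|\rho\|_1=1$ for a pure state. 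An alternative route is to note that $\braket{O}=\braket{\psi|O|\psi}$ with $\ket{\psi}=U\ket{\mathrm{init}}$ a unit vector. This quantity lies between $\lambda_O^{\min}$ and $\lambda_O^{\max}$, both of which are bounded in modulus by $\|O\|_\infty$ because $O$ is Hermitian. Either argument gives Eq.~\eqref{eq:operator_norm_ub} for an arbitrary unitary, so it holds for both the $\Theta$ and $\mathcal{U}$ descriptions.

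For the equality claim, evaluate the ansatz $\mathcal{C}_{1,1}$ of Eq.~\eqref{eq:equality_cond_circuit} at $\theta=0$. There $U(0)=I$, so $\braket{O}=\braket{\mathrm{init}|\sigma_x|\mathrm{init}}=1$, since $\ket{\mathrm{init}}$ is the $+1$ eigenvector of $\sigma_x$. Because $\|\sigma_x\|_\infty=1$, the bound is saturated. It is also worth recording the general formula $\braket{O}_\theta=\cos\theta$, which follows from $U(\theta)^\dagger\sigma_xU(\theta)=\sigma_x\cos\theta-\sigma_y\sin\theta$ and $\braket{\mathrm{init}|\sigma_y|\mathrm{init}}=0$. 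From it, $\theta=\pi/2$ gives $\braket{O}=0$ and $\theta=\pi$ gives $\braket{O}=-1$. These values are the ones needed later for the even and odd equality conditions, Eqs.~\eqref{eq:equality_cond_k_even} and~\eqref{eq:equality_cond_k_odd}.

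There is no real obstacle here. The only care needed is the sign convention in the conjugation of $\sigma_x$ by $e^{-i\theta\sigma_z/2}$, and that sign does not affect the value at $\theta=0$.
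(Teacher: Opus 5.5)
Your proposal is correct and follows essentially the same route as the paper: the bound via $|\mathrm{Tr}[A]|\le\|A\|_1$ and H\"{o}lder's inequality with $\|\rho\|_1=1$ and unitary invariance, and the equality via $U(\theta)^\dagger\sigma_xU(\theta)=\sigma_x\cos\theta-\sigma_y\sin\theta$, giving $\braket{O}=\cos\theta=1$ at $\theta=0$. Your alternative eigenvalue argument and the extra values at $\theta=\pi/2$ and $\theta=\pi$ are also correct.
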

\begin{proof}
    By applying $|\mathrm{Tr}[A]|\le \|A\|_1$ and the H\"{o}lder's inequality, it follows that 
    \begin{align}
        |\braket{O}|=|\mathrm{Tr}[U(\bm{\theta})\rho U(\bm{\theta})^\dagger O ]|\le \|U(\bm{\theta})\rho U(\bm{\theta})^\dagger\|_1\|O\|_\infty=\|O\|_\infty.
    \end{align}

    We next prove the equality condition for $O=\sigma_x$.
    From $U(\theta)^\dagger \sigma_x U(\theta)=\sigma_x\cos\theta-\sigma_y\sin\theta$, $\braket{\mathrm{init|\sigma_x|\mathrm{init}}}=1$, and $\braket{\mathrm{init|\sigma_y|\mathrm{init}}}=0$, we obtain 
    \begin{align}
        \braket{O}=\mathrm{Tr}[\rho U(\theta)^\dagger O U(\theta)]=\cos\theta.
        \label{eq:equality_cond_theta2}
    \end{align}
    Since $\|\sigma_x\|_\infty=1$, the equality holds at $\theta=0$.
\end{proof}
\subsection{Proof}
In the following, we first provide a proof for the probability measures $(P_{\mathcal{U}},Q_{\mathcal{U}})$. Letting $X=\braket{O}^k$, the proof is analogous to that of Eq.~\eqref{eq:main_result1}.
We consider the case where $k$ is even. 
Since $\braket{O}^k\geq 0$, by applying Lemma~\ref{lem1},~\ref{lem2},~\ref{lem3} and~\ref{lem6}, we obtain Eq.~\eqref{eq:main_result2} for $C(k)=1$.
Consider the  ansatz  $\mathcal{C}_{1,1}$. 
From Eq.~\eqref{eq:equality_cond_theta2}, Lemma~\ref{lem2},~\ref{lem3}, and~\ref{lem6}, the equality holds for Eq.~\eqref{eq:equality_cond_k_even}.
We next consider the case where $k$ is odd. 
Following a similar procedure in Eq.~\eqref{eq:f_div_binary_div}, by applying Lemma~\ref{lem1},~\ref{lem2},~\ref{lem5} and~\ref{lem6}, we obtain Eq.~\eqref{eq:main_result2} for $C(k)=2$.
From Eq.~\eqref{eq:equality_cond_theta2}, Lemma~\ref{lem2},~\ref{lem5} and~\ref{lem6}, the equality holds for Eq.~\eqref{eq:equality_cond_k_odd}.

The proof of Eq.~\eqref{eq:main_result2} for $(P_{\Theta},Q_{\Theta})$ follows from Eqs.~\eqref{eq:data_processing_str} and~\eqref{eq:induced_relation}. A similar argument applies to the proof of the equality conditions.

\section{Equality condition for $n$-qubit circuit  \label{sec:equaility_nqubit}}

Let $\ket{v}:=\frac{1}{\sqrt{2}}(\ket{0}+\ket{1})$, and let $I_{2}$ be the $2\times 2$ identity operator.
Consider  an $n$-qubit and one-layer ansatz that corresponds to the case of $L=1$ and $M=n$ in Eq.~\eqref{def:unitary_gates} , where parameters are given by $\bm{\theta}\in [0,2\pi]^n$:
\begin{align}
    &H_j = I_2^{\otimes (n-j)}\otimes\frac{\sigma_z}{2}\otimes I_2^{\otimes (j-1)} \; \text{ for $1\le j \le n$},\nonumber\\ 
    &O=\sigma_x^{\otimes n}, \;\ket{\mathrm{init}}:=\ket{v}^{\otimes n}.
    \label{eq:equality_nqubit}
\end{align}
For simplicity, we consider the case where $j=1$.
As in Eqs.~\eqref{eq:equality_cond_theta} and~\eqref{eq:equality_cond_theta2}, we obtain 
\begin{align}
    \partial_1 O&=-\sin\theta_1\prod_{j=2}^n \cos\theta_j, 
    \label{eq:equality_theta_nqubit} \\\
    \braket{O}&=\prod_{j=1}^n \cos\theta_j.
\end{align}
Noting that $2\|H_1\|_R=\|O\|_\infty=1$, the equality holds in Lemma~\ref{lem4} for $\bm{\theta}=\left(\frac{\pi}{2}, 0, 0, \ldots, 0\right)$ and in Lemma~\ref{lem6} for $\bm{\theta}=\bm{0}$, respectively. 
For $(P_{\Theta},Q_{\Theta})$, the equality condition for Eq.~\eqref{eq:main_result1} and Eq.~\eqref{eq:main_result2} for even $k$ is given by
\begin{align}
    P_{\Theta}^B(\bm{\theta}=\bm{0})=\frac{1-r}{2}, \;P_{\Theta}^B\left(\bm{\theta}=\left(\frac{\pi}{2}, 0, 0, \ldots, 0\right)\right)=\frac{1+r}{2}, \nonumber \\
    Q_{\Theta}^B(\bm{\theta}=\bm{0})=\frac{1+r}{2}, \;Q_{\Theta}^B\left(\bm{\theta}=\left(\frac{\pi}{2}, 0, 0, \ldots, 0\right)\right)=\frac{1-r}{2}.
    \label{eq:equality_grad_nqubit}
\end{align}
The equality condition of Eq.~\eqref{eq:main_result2} for odd $k$ is given by
\begin{align}
    P_{\Theta}^B(\bm{\theta}=\bm{0})=\frac{1-r}{2}, \;P_{\Theta}^B\left(\bm{\theta}=\left(\pi, 0, 0, \ldots, 0\right)\right)=\frac{1+r}{2}, \nonumber \\
    Q_{\Theta}^B(\bm{\theta}=\bm{0})=\frac{1+r}{2}, \;Q_{\Theta}^B\left(\bm{\theta}=\left(\pi, 0, 0, \ldots, 0\right)\right)=\frac{1-r}{2}.
    \label{eq:equality_grad_nqubit2}
\end{align}
The equality conditions of Eq.~\eqref{eq:main_result2} with $(P_{\mathcal{U}},Q_{\mathcal{U}})$ can be derived in a similar manner:
For even $k$, we obtain
\begin{align}
    P_{\mathcal{U}}^B(U=I_2^{\otimes n})=\frac{1-r}{2}, \;P_{\mathcal{U}}^B\left(U=I_2^{\otimes(n-1)}e^{-i\pi H_1/2}\right)=\frac{1+r}{2}, \nonumber \\
    Q_{\mathcal{U}}^B(U=I_2^{\otimes n})=\frac{1+r}{2}, \;Q_{\mathcal{U}}^B\left(U=I_2^{\otimes(n-1)}e^{-i\pi H_1/2}\right)=\frac{1-r}{2}. 
    \label{eq:equality_grad_nqubit_unitary}
\end{align}
The equality condition for odd $k$ is given by
\begin{align}
     P_{\mathcal{U}}^B(U=I_2^{\otimes n})=\frac{1-r}{2}, \;P_{\mathcal{U}}^B\left(U=I_2^{\otimes(n-1)}e^{-i\pi H_1}\right)=\frac{1+r}{2}, \nonumber \\
    Q_{\mathcal{U}}^B(U=I_2^{\otimes n})=\frac{1+r}{2}, \;Q_{\mathcal{U}}^B\left(U=I_2^{\otimes(n-1)}e^{-i\pi H_1}\right)=\frac{1-r}{2}.
    \label{eq:equality_grad_nqubit2_unitary}
\end{align}

\section{Asymptotic behavior of $ D_{f}^{\mathrm{str}}(P,Q)$ \label{sec:asymptotic}}

Applying the Taylor expansion of $f(x)$ around $x=1$ and using $f(1)=0$, $\int pd\mu=1$, we obtain
\begin{align}
    \tilde{D}_f(P,Q)=f''(1)\int \frac{(p-q)^2}{p+q}d\mu +\mathcal{O}\left(\int (p-q)^4d\mu \right)=2f''(1)\Delta(P,Q)+\mathcal{O}\left(\int (p-q)^4d\mu \right) \ll 1.
\end{align}
From  Eq.~\eqref{eq:def_binary_div}, it follows that $d_f(s)= 2f''(1)s^2 + \mathcal{O}(s^4)$ for $|s|\ll 1$. Since $f''(1)> 0$ for $f\in \mathcal{F}$, we obtain 
\begin{align}
    D_{f}^{\mathrm{str}}(P,Q)=\sqrt{\Delta(P,Q)}+\mathcal{O}\left(\int (p-q)^3d\mu \right)
\end{align}
for arbitrary $f(x)$.

\end{widetext}

\end{document}